\documentclass[runningheads]{llncs}

\usepackage[T1]{fontenc}
\usepackage{amsmath,amssymb,mathtools}
\usepackage{microtype}         
\usepackage{graphicx}
\usepackage{algorithm}
\usepackage{algpseudocode}

\begin{document}

\title{Navigating Small-World Networks with Distance Predictions }

\author{Ladan Kian \and Ming Ming Tan \and Dariusz Kowalski}

\authorrunning{L. Kian et al.}

\institute{Augusta University, Augusta, GA, USA\\
\email{lkian@augusta.edu}\\
\email{mtan@augusta.edu}\\
\email{dkowalski@augusta.edu}}

\maketitle
\begin{abstract}
The small-world phenomenon was given an algorithmic foundation by Kleinberg, who showed that in an augmented $k$-dimensional lattice a decentralized greedy algorithm delivers a message in $O(\log^2 n)$ expected steps. We study predicted-greedy routing, in which a mobile agent forwarding the message moves at each step to the neighbor minimizing a noisy $(\varepsilon,\delta)$-prediction of its distance to the target, redrawn at every step from an oracle conditioned on the full routing history. Two cases arise from what this agent can observe. An agent with the coordinate awareness can still compute lattice distance exactly, but not graph distance in the shortcut-augmented network, since that depends on the shortcuts of nodes it has not yet visited; given an $(\varepsilon,\delta)$-prediction of graph distance, information the classical model never supplies, it achieves expected delivery time $O(\log n/(1-4k\varepsilon\delta))$, an asymptotic improvement over $\Theta(\log^2 n)$. An agent with no coordinate awareness at all, the natural model for a privacy-preserving network whose nodes never disclose their coordinates, cannot compute even lattice distance; given an $(\varepsilon,\delta)$-prediction of lattice distance instead, it still reaches the target in $O(n/(1-4k\varepsilon\delta))$ expected steps.
Together these results show that a modest amount of predicted information, of the right kind, is enough to accelerate decentralized routing well below Kleinberg's classical bound, and that even when nodes reveal no coordinates at all, reliable delivery remains achievable.

\keywords{Small-world Networks \and Greedy Routing \and Distributed Algorithms \and Predictions}

\end{abstract}

\section{Introduction}
\label{sec:intro}
The small-world phenomenon, the empirical observation that arbitrary pairs of individuals in large social networks are connected by surprisingly short chains of acquaintances, was first documented by Milgram's letter-forwarding experiments~\cite{traversmilgram1969} and given a network-theoretic foundation by Watts and Strogatz~\cite{wattsstrogatz1998}, who showed that
superimposing a small number of long-range random links, shortcuts, onto a highly
clustered lattice produces graphs with both high clustering and small
diameter. Kleinberg~\cite{kleinberg2000nature,Kleinberg2000STOC} observed that low diameter alone does not explain Milgram's experiment: what is striking is that individuals, using only local information, are collectively able to find these short paths. Kleinberg formalized this as a decentralized routing problem and introduced the augmented-lattice model that underlies our topology, in which each node of a
$k$-dimensional grid keeps its local lattice links and additionally draws a single long-range shortcut with probability proportional to $r(u,w)^{-k}$, the lattice distance raised to the $-k$ power. Kleinberg showed that a simple decentralized greedy algorithm, always advancing to the neighbor of smallest known distance to the target, attains delivery time
$O(\log^2 n)$ precisely when the shortcut exponent is tuned to the lattice dimension, and that no decentralized algorithm can do so for any other exponent~\cite{kleinberg2000nature,kleinberg2001nips}. Martel and Nguyen~\cite{NguyenMartel2004PODC} subsequently showed this $O(\log^2 n)$ bound is in fact tight, $\Theta(\log^2 n)$, and that the graph's expected diameter is a full $\log n$ factor smaller, $\Theta(\log n)$: a message exists along an $O(\log n)$-length path, but no decentralized greedy crawler operating on exact local distances alone can find it, only a path
$\log n$ times longer. 

Throughout this classical literature, the distance information available at each node, whether lattice distance to a query target or the identity of a node's own shortcut, is assumed to be known exactly. This is a strong and, for many of the systems that motivate small-world models, an unrealistic assumption. A decentralized system that routes social-network queries,
peer-to-peer lookups, or mobile-agent search rarely has access to ground truth: what it has is a prediction, learned from prior traffic, cached from an earlier crawl, inferred from partial signals, or supplied by an external oracle, none of which is guaranteed correct. This motivates
asking a question the classical analysis of Kleinberg's model does not answer: how much does a small-world network's navigability degrade when the information driving greedy routing is imperfect, and whether the degradation can be controlled as a clean
function of the estimate's error. We answer this question for two natural
notions of imperfect distance information available to a decentralized crawler.

We answer this question inside the algorithms with predictions
paradigm~\cite{mitzenmachervassilvitskii2022}, in which a classical algorithm is
augmented with an untrusted, error-parameterized oracle and analyzed as a function of the oracle's error, interpolating between worst-case and oracle-optimal performance. Concretely, we equip the crawler with a
prediction oracle that, at every step, returns for each candidate neighbor an estimate of its distance to the target, guaranteed only to be correct in order with probability at least $1-\varepsilon$ between any two
neighbors, and correct in magnitude up to an additive error $\delta$.
Crucially, the oracle is queried fresh at every step, conditioned on the crawler's entire history so far, including any previous visit to the same node: a stronger, more honest requirement than assuming the predictions are fixed once before routing begins, and one that turns out to be exactly what is needed to keep the analysis well behaved.

A navigation query, in our setting as in Kleinberg's, is executed by a single mobile agent, the crawler, that moves one hop at a time and is not trusted to write to, or persistently remember information about, the nodes it visits: only to read a bounded amount of locally available information at its current position. What the crawler can read, however, is not fixed by the topology alone, and this is where our two cases diverge. If the
crawler has full coordinate awareness, it can already compute the exact lattice distance from any neighbor to the target; the only genuinely unknown quantity is the graph distance in the shortcut-augmented network, since that depends on shortcuts belonging to nodes the crawler has not yet visited. A prediction is therefore only informative here if it predicts graph distance (Case~1). If instead the crawler has no coordinate awareness at all, neither its own coordinate nor the target's, then even lattice distance is unknown, and a
prediction of lattice distance becomes the meaningful primitive (Case~2).
This second setting models a privacy-preserving network in which a node's position reflects unobserved latent attributes rather than disclosed coordinates, in the spirit of hidden-metric models of navigable networks~\cite{bogunapapadopouloskrioukov2010}. Hiding
coordinates from the crawler is a privacy guarantee for the nodes, not merely a modeling restriction; correspondingly, an untrusted
oracle providing predicted lattice distance without ever exposing the underlying coordinates is a natural way to preserve that guarantee while still enabling navigation.

\subsection{Our Contribution}
\paragraph{Significance.} We give the first analysis, to our knowledge, of
decentralized greedy routing in a Kleinberg small-world network under the algorithms-with-predictions paradigm, in which an untrusted, error-parameterized oracle replaces exact distance information at every routing step. We study the two prediction primitives motivated above:
predictions of graph distance under full coordinate awareness (Case~1), and predictions of lattice distance under no coordinate awareness (Case~2). For
Case~1, predicted-greedy routing achieves expected delivery time
$O(\log n / (1-4k\varepsilon\delta))$: for any fixed error parameters with
$4k\varepsilon\delta<1$, this is an asymptotic improvement over Kleinberg's own $\Theta(\log^2 n)$ greedy bound, not merely a recovery of it in the noiseless limit, since even an imperfect graph-distance oracle carries more routing-relevant information than exact lattice distance alone: its predictions incorporate information about the effect of the unobserved shortcuts on shortest-path distances in the realized network, without necessarily identifying those shortcuts explicitly. 
As $\varepsilon,\delta \to 0$ the bound degrades gracefully to $O(\log n)$, matching the graph's true expected diameter established by Martel and Nguyen~\cite{NguyenMartel2004PODC} up to constants, so no accuracy is free-lunched: the bound simply shows that a good-enough oracle lets a decentralized crawler approach the performance of an agent with global knowledge of the network.

For Case~2, we identify and resolve a robustness question: whether prediction error can cause the crawler to cycle indefinitely once it breaks the strict per-step progress that exact greedy routing enjoys for free. We show this cannot happen: because the oracle reissues a fresh, history-conditioned prediction at every step, plain greedy routing terminates in finite expected time, giving a delivery time bound linear rather than polylogarithmic in $n$. Section \ref{sec:delivery-analysis} explains precisely why our argument certifies only this weaker rate, and we return to whether a sharper analysis is possible in Section~\ref{sec:conclusion}.

\paragraph{Key ideas.} Both results follow the same template: identify a one-step ``excess distance'' random variable measuring how far a single routing decision falls short of the best available neighbor, bound its tail
probability using the oracle's monotonicity guarantee and its magnitude using the oracle's additive-accuracy guarantee, and telescope the resulting one-step drift into a bound on the expected hitting time via a direct additive-drift argument, without appealing to any independence assumption across steps that a revisit would violate. The two cases diverge exactly
where the underlying metric diverges: graph distance is itself a
shortest-path metric on the full network, so no neighbor, including the shortcut, can ever undercut the graph-optimal neighbor, giving a two-sided bound on the excess distance. Lattice distance is not a shortest-path metric once shortcuts are added, so the shortcut can outperform the best
local lattice neighbor by an arbitrary amount, giving only a one-sided bound; this asymmetry is the precise reason Case~2's rate is weaker, the analysis never needs, and therefore never benefits from, the shortcut's
power-law placement that drives Kleinberg's original speedup.

\section{Related Work}
\label{sec:related-work}
Our decentralized routing problem sits at the intersection of two lines of work: the classical analysis of Kleinberg's augmented-lattice small-world model, and the recent algorithms-with-predictions paradigm. We discuss each
in turn, then position our contribution against both.

\paragraph{Classical small-world navigability.} 
Kleinberg's original result fixed the shortcut exponent to the lattice dimension and showed greedy routing achieves $O(\log^2 n)$ expected
delivery~\cite{kleinberg2000nature,Kleinberg2000STOC}, later shown tight, $\Theta(\log^2 n)$, by Martel and Nguyen~\cite{NguyenMartel2004PODC}, who
also established the graph's expected diameter as the strictly smaller
$\Theta(\log n)$, the gap our Case~1 result closes under a sufficiently accurate oracle. Martel and Nguyen further characterize small-world graphs
more broadly, including geometries beyond the basic grid~\cite{NguyenMartel2005SODA},
and Kleinberg's own survey consolidates the model and its decentralized search guarantees~\cite{Kleinberg2006ICM}. The diameter of the underlying shortcut percolation structure itself has been pinned down precisely as a function of the shortcut exponent~\cite{CoppersmithGamarnikSviridenko2002},
and later work gives exact asymptotic characterizations of navigability at
and around Kleinberg's critical exponent~\cite{CarettaCartozoDLR09,CarmiCSB09}.
Most recently, Alimohammadi et al.~\cite{AlimohammadiIsikSaberi2025}
use local weak convergence to characterize the Kleinberg model's structure
in the large-$n$ limit and pin down the same critical exponent from a different, local-limit perspective, corroborating why navigability is so
exponent-sensitive. All of this literature analyzes a single canonical
process, greedy hops toward a target under exact distances; a
different, complementary process on the same topology, the unbiased random walk, exhibits its own, distinctly located phase transition in mixing time as a function of the shortcut exponent~\cite{DyerGalanisGoldbergJerrumVigoda2020},
underscoring that the exponent-navigability relationship is a property of the network, not an artifact of the particular decentralized process studied.
It has to be noted that small-world networks were not only used for point-to-point navigability, but also for more complex communication and computation tasks, e.g., computing symmetric functions via aggregation, see e.g.,~\cite{KamenevKM23}.

\paragraph{Improving greedy with additional information.} A
substantial body of work improves on plain greedy by giving nodes more, but still perfectly reliable, local knowledge. Manku, Naor and
Wieder~\cite{MankuNaorWieder2004STOC} and Naor and
Wieder~\cite{NaorWieder2004IPTPS} show that lookahead to a neighbor's
neighbors reduces delivery time to $O(\log^{1+1/k} n)$ in expectation;
Fraigniaud et al.~\cite{FraigniaudGavoillePaul2004PODC} obtain a
similar improvement via topological awareness of nearby shortcuts. Zeng et al.~\cite{zenghsuwang2005} augment each node with $O(\log n)$ bits of local awareness of nearby shortcuts to reach a near-optimal
$O(\log n \log\log n)$ expected hop count, and Ruas~\cite{Ruas2013Report}
shows that combining a power-law contact distribution with one-hop
lookahead can drive the expected number of hops down to $O(1)$ in a
one-dimensional variant of the model. This line of work asks how much more exact information a decentralized node needs to beat plain
greedy, the opposite direction from ours: we ask how much routing degrades when the node's information, even of the same scope as Kleinberg's, is no
longer exact. Giakkoupis and Schabanel~\cite{GiakkoupisSchabanel2011STOC}
and Fraigniaud and Giakkoupis~\cite{FraigniaudGiakkoupis2010STOC} sharpen
the exponent-versus-dimension transition and extend navigability to broader
underlying graph families; Fraigniaud et al.~\cite{fraigniaudlebharlotker2006} generalize navigability to
arbitrary doubling-dimension graphs. 

\paragraph{Geographic and geometric routing under imperfect location
information.} A separate, older literature studies greedy forwarding in
physically embedded geometric networks, and is the closest prior work,
outside the predictions paradigm, to asking what happens when a
crawler's position or distance information is not perfectly reliable.
Karp and Kung's GPSR~\cite{karpkung2000} and Kranakis, Singh and Urrutia's
compass routing~\cite{kranakissinghurrutia1999} both forward greedily by
geographic or angular proximity to the destination, falling back to a
perimeter- or face-traversal recovery rule when greedy forwarding reaches a
local void, under the standing assumption that every node's coordinates are
exact. Seada, Helmy and Govindan~\cite{seadahelmygovindan2004} study
exactly what happens when that assumption fails: they show that even small
localization errors, as little as ten percent of radio range, can cause
geographic face routing to fail non-recoverably, and quantify the resulting
performance degradation empirically. This is, to our knowledge, the closest
existing result in spirit to ours, imperfect position information degrading
a greedy geometric routing rule, but the setting, guarantee, and error
model all differ substantially from ours: it concerns a fixed, per-instance
geometric embedding and empirically measured failure rates under
deterministic perimeter recovery, rather than an $(\varepsilon,\delta)$-accurate
stochastic oracle re-queried at every step within Kleinberg's shortcut-augmented
lattice, and it does not provide an expected-delivery-time bound as a closed-form function of the error, which is the central object of our analysis.

\paragraph{Hidden-metric and privacy-motivated models.} Our Case~2
crawler, which observes node identifiers but no coordinates, is motivated by hidden-metric models of real navigable networks, in which a node's position reflects latent, unobserved attributes rather than disclosed coordinates~\cite{bogunapapadopouloskrioukov2010}. In that literature the hidden geometry is
recovered statistically from the observed graph rather than supplied by an untrusted oracle at query time; our contribution is to ask what decentralized navigation looks like when the crawler is given only a noisy, per-query prediction of this hidden distance and nothing else, never the coordinates themselves, which is the privacy-preserving reading of the model motivated in Section~1.

\paragraph{Algorithms with predictions.} The algorithms-with-predictions
paradigm augments a classical algorithm with an untrusted,
error-parameterized oracle and analyzes performance as a function of the oracle's error, interpolating between worst-case and oracle-optimal
behavior~\cite{mitzenmachervassilvitskii2022}. It has been applied to online
caching~\cite{lykouriasvassilvitskii2021}, non-clairvoyant and online
scheduling~\cite{purohitsvitkinakumar2018}, and contract
scheduling~\cite{angelopouloskamali2023}, and has only very recently entered
distributed and self-stabilizing computing. Boyar, Ellen and
Larsen~\cite{boyarellenlarsen2025} initiate deterministic distributed graph
algorithms with predictions in synchronous message passing, illustrated on
Maximal Independent Set, where every node receives a prediction of its own
output bit and predictions are evaluated against the same round-complexity
measure used without predictions. Aradhya and
Scheideler~\cite{aradhyascheideler2025} study self-stabilizing graph
linearization under untrusted advice, and Balliu et
al.~\cite{balliuetal2025} study distributed computation with local advice
more generally. None of these consider navigation: in all three,
every node receives one prediction about a static, global property of the
graph or its own final output, evaluated once, whereas in our setting a
single mobile crawler receives a fresh, history-conditioned prediction at
every step of a dynamic query, about a value, distance to a
query-specific target, that is not fixed in advance and depends on which
node has been asked. This distinction is exactly what let us dispense with
the memory and cycle-avoidance machinery that a fixed, once-issued
prediction would have required (Section~4): the dynamic, per-step,
history-conditioned nature of the oracle is not an incidental modeling
choice but the mechanism that makes plain greedy routing terminate with no
auxiliary state at all. To the best of our knowledge, no prior work in the
predictions literature considers decentralized navigation in a small-world
topology, the setting of this paper.

\section{Model and Preliminaries}
\label{sec:model}

\subsection{Network Model}
\label{sec:network-model}
Fix an integer $k\geq 1$, the dimension, and an integer $n\geq 2$, the
side length. The node set is the $k$-dimensional grid
$V=\{1,\ldots,n\}^{k}$, $N=|V|=n^{k}$; each $u\in V$ is a lattice
point $u=(u_1,\ldots,u_k)$, and each node is additionally assigned a
unique identifier, independent of its coordinate. Let
$r(u,v)=\sum_{j=1}^{k}|u_j-v_j|$
denote the \textbf{lattice distance}, following Kleinberg's original
grid formulation \cite{Kleinberg2000STOC,kleinberg2000nature}.

Each node $u$ has an undirected local link to every node at lattice
distance $1$; write $N(u)$ for this set of \emph{lattice neighbors}.
Interior nodes satisfy $|N(u)|=2k$; nodes within distance $1$ of the
grid boundary have fewer neighbors. There are $O(n^{k-1}) =
O(N^{(k-1)/k})$ such boundary-affected nodes, a vanishing fraction of
$N$, and we follow the standard convention of treating this as a
negligible edge effect rather than modeling it explicitly
\cite{NguyenMartel2004PODC}; statements below about "a node $u$" hold
for interior nodes, which suffices with high probability for a
uniformly random source or target.

In addition, each node $u$ independently creates one directed \emph{shortcut link} $(u,W_u)$, whose endpoint $W_u\ne u$ is sampled
with probability
\[
    \Pr[W_u = w] \;=\; \frac{r(u,w)^{-k}}{Z_u},
    \qquad
    Z_u \;=\! \sum_{x\in V\setminus\{u\}} r(u,x)^{-k}.
\]

These choices are independent across nodes and links. All
shortcut links are sampled once before routing begins and remain
fixed throughout the routing process. There exist constants $C_{\mathrm{lo}},C_{\mathrm{up}}>0$, depending
only on $k$, with $C_{\mathrm{lo}}\log n \le Z_u \le C_{\mathrm{up}}
\log n$ for every interior $u\in V$ and all sufficiently large $n$
\cite{Kleinberg2000STOC,NguyenMartel2004PODC}.

This is the $p=q=1$, with the clustering exponent $\alpha=k$ specialization of
Kleinberg's construction \cite{Kleinberg2000STOC}, generalized to
$k$ dimensions as in \cite{NguyenMartel2004PODC}; $\alpha=k$ is the
unique exponent for which a decentralized algorithm can achieve
polylogarithmic expected delivery time \cite{Kleinberg2000STOC}, and
under it, lattice greedy routing (forwarding, at each step, to the
neighbor of smallest true lattice distance to the target) achieves
expected delivery time $\Theta(\log^2 n)$, tight, with the graph's
expected diameter a further $\log n$ factor smaller, $\Theta(\log n)$
\cite{NguyenMartel2004PODC}; expectations throughout are over the
random shortcuts and over a source and target drawn uniformly at
random~from~$V$.

Let $G=(V,E)$ be the resulting fixed directed graph, each undirected
local adjacency represented by two opposite directed links, together
with every node's shortcut link. For $u\in V$, write $N^{+}(u):=N(u)\cup\{W_u\}$ for its set of outgoing neighbors.

We use two distinct notions of distance on $G$ throughout the paper.
The \emph{lattice distance} $r(u,v)$, defined above, ignores all
shortcuts and depends only on the underlying grid; its maximum value
is $L=O(n)=O(N^{1/k})$. The \textbf{graph distance}
\[
    \ell_G(u,v) \;=\; \text{length of a shortest directed path from $u$ to $v$ in $G$}
\]
is the shortest-path distance in the full network, shortcuts
included, and is finite for every $u,v\in V$ since the local lattice
edges alone connect the grid. Since every lattice path is in
particular a path in $G$, $\ell_G(u,v)\le r(u,v)\le L$ for all
$u,v\in V$; the two notions coincide only when no shortcut shortens
the route. Section~\ref{sec:case1} studies predictions of $\ell_G$,
and Section~\ref{sec:case2} studies predictions of $r$.

A \textbf{navigation query} is an ordered pair $(s,t)\in V\times V$, the source and target, executed by a single mobile agent, the
\textbf{crawler}, initially located at $s$. When located at a node
$u\ne t$, a \emph{routing step} consists of selecting an outgoing
neighbor $v\in N^{+}(u)$ and traversing the edge $(u,v)$; we call the
selection of $v$ the \emph{routing decision} of that step. The query
is delivered the first time the crawler reaches $t$, and the
\emph{delivery time} $T$ is the number of routing steps taken.

What the crawler observes at each node, in particular, whether
coordinates are visible, and whether the available predictions
concern graph distance or lattice distance, differs between the two
settings studied in this paper, and is specified separately in
Section~\ref{sec:case1} and Section~\ref{sec:case2}.

\subsection{Case 1: Graph-Distance Predictions}
\label{sec:case1}

In this setting the crawler has full coordinate awareness: it is
given the coordinate of the target $t$, and, when located at a node
$u$, it knows the coordinate of $u$ and the identities and
coordinates of every node in $N^{+}(u)$, including the endpoint of
its shortcut link. It can therefore compute the exact lattice
distance $r(v,t)$ for every $v\in N^{+}(u)$. However, it is not given the shortcut links of nodes it has not yet visited; since these unobserved links can shorten paths to the target, the crawler does not in general know $\ell_G(v,t)$ for $v\in N^{+}(u)$. 
For the fixed target $t$ of a given navigation query, write
\[
    h_t(u) := \ell_G(u,t) \qquad \text{for every } u\in V,
\]
the true shortest directed graph distance from $u$ to $t$ in the
realized network $G$.

\textbf{Prediction oracle.} We adopt the algorithms with predictions
paradigm \cite{mitzenmachervassilvitskii2022}; the oracle returns an error-parameterized
prediction vector at each step, formalized as follows.

Index routing steps $i=0,1,\ldots$, let $U_i$ denote the crawler's
location at the start of step $i$ ($U_0=s$), and let $\mathcal{F}_i$
denote the pre-step history, the source, the target, all nodes
visited up to and including $U_i$, all previously issued predictions,
and all previous routing decisions, excluding the prediction issued
at step $i$ itself.

\begin{definition}[$(\varepsilon,\delta)$-graph-distance prediction oracle]
\label{def:oracle-case1}
Fix $ 0 \leq \varepsilon \leq 1$ and $0 \leq \delta$. A prediction oracle is an
$(\varepsilon,\delta)$-graph-distance prediction oracle if, for every
realized network $G$, every target $t$, and every feasible pre-step
history $\mathcal{F}_i$ with $U_i=u\ne t$, the oracle returns a
prediction vector $\hat h_i=(\hat h_i(v):v\in N^{+}(u))$ satisfying:
\begin{enumerate}
\item[(i)] \emph{Conditional pairwise monotonicity.} For every
$v,w\in N^{+}(u)$ with $h_t(v)<h_t(w)$,
\[
    \Pr\bigl[\hat h_i(w) \le \hat h_i(v) \mid \mathcal{F}_i\bigr] \le \varepsilon.
\]
\item[(ii)] \emph{Additive approximation.} Conditional on
$\mathcal{F}_i$, with probability $1$,
\[
        \left|\hat{h}_i(v)-h_t(v)\right|
        \leq \delta
        \qquad
        \text{for every }v\in N^{+}(u).
\]
\end{enumerate}
\end{definition}

A new prediction vector is generated at every routing step, including a
step at which $U_i$ is a node the crawler has already visited. The
oracle may choose the distribution of $\hat h_i$ depending on the
complete history $\mathcal{F}_i$; in particular, prediction vectors
generated at different routing steps need not be independent.

\begin{definition}[Predicted-greedy routing]
\label{def:predicted-greedy-case1}
At every routing step $i$ with $U_i\ne t$, the crawler receives the
prediction vector $\hat h_i$ and selects an outgoing neighbor that
minimizes the predicted graph distance to the target,
\[
    U_{i+1} \in \arg\min_{v\in N^{+}(U_i)} \hat h_i(v),
\]
with a fixed, arbitrary tie-breaking rule when the minimum is
attained by more than one neighbor.
\end{definition}

Algorithm~\ref{alg:predicted-greedy} in Appendix~\ref{sec:alg} gives pseudocode for this update rule, covering both Case~1 and Case~2 (Definition~\ref{def:predicted-greedy-case2}).

\subsection{Case 2: Lattice-Distance Predictions}
\label{sec:case2}
In this setting the crawler has no coordinate awareness whatsoever.
Nodes are distinguished only by identifier, and, by construction
(Section~\ref{sec:network-model}), an identifier reveals nothing
about a node's coordinate. When located at a node $u$, the crawler
observes the identifier of $u$, the identifiers of the nodes in
$N^{+}(u)$, which of them is the shortcut neighbor, and, for every
$v\in N^{+}(u)$, a prediction of the lattice distance $r(v,t)$; it
observes neither its own coordinate, nor $t$'s, nor that of any
neighbor. This restriction is essential to the model, not incidental
to it: were coordinates observable, as in Case 1, the
crawler could compute $r(v,t)$ exactly for every $v\in N^{+}(u)$
without consulting any oracle, and a lattice-distance prediction
would carry no information. 

Hiding coordinates is therefore what
makes lattice-distance prediction a meaningful primitive here, and it
is also the more realistic assumption for a social network, in which
a node's position corresponds to unobserved latent attributes
(interests, communities, or other features) rather than to disclosed
physical coordinates, in the spirit of hidden-metric models of
navigable networks \cite{bogunapapadopouloskrioukov2010}.
For the fixed target $t$ of a given navigation query, we write
\[
r_t(u):=r(u,t) \qquad \text{for every} u\in V,
\]
the true lattice distance from $u$ to the target $t$.

\begin{definition}[$(\varepsilon,\delta)$-lattice-distance prediction oracle]
\label{def:oracle-case2}
Fix $0\le\varepsilon\le 1$ and $0 \leq \delta$. A prediction oracle is
an $(\varepsilon,\delta)$-lattice-distance prediction oracle if, for
every realized network $G$, every target $t$, and every feasible
pre-step history $\mathcal{F}_i$ with $U_i=u\ne t$, the oracle
returns a prediction vector $\hat p_i = (\hat p_i(v) : v\in
N^{+}(u))$ satisfying:
\begin{enumerate}
\item[(i)] \emph{Conditional pairwise monotonicity.} For every
$v,w\in N^{+}(u)$ with $r_t(v)<r_t(w)$,
\[
    \Pr\bigl[\hat p_i(w) \le \hat p_i(v) \mid \mathcal{F}_i\bigr] \le \varepsilon.
\]
\item[(ii)] \emph{Additive approximation.} Conditional on
$\mathcal{F}_i$, with probability $1$,
\[
    \bigl|\hat p_i(v) - r_t(v)\bigr| \le \delta
    \qquad \text{for every } v \in N^{+}(u).
\]
\end{enumerate}
\end{definition}

As in Case 1, a new prediction vector is generated at every routing step, including
a step at which $U_i$ is a node the crawler has already visited. The
oracle may choose the distribution of $\hat p_i$ depending on the
complete history $\mathcal{F}_i$; in particular, prediction vectors
generated at different routing steps need not be independent.

\begin{definition}[Predicted-greedy routing, lattice-distance case]
\label{def:predicted-greedy-case2}
At every routing step $i$ with $U_i \neq t$, the crawler receives the
prediction vector $\hat p_i$ and selects an outgoing neighbor that minimizes
the predicted lattice distance to the target,
\[
U_{i+1} \in \arg\min_{v \in N^+(U_i)} \hat p_i(v),
\]
with the same fixed, arbitrary tie-breaking rule as in Definition~\ref{def:predicted-greedy-case1} when the
minimum is attained by more than one neighbor.
\end{definition}

Unlike the graph distance $h_t$ used in Case~1, the lattice distance $r_t$ is not itself a shortest-path metric on $G$: a shortcut endpoint
$w = W_u$ can satisfy $r_t(w) \ll r_t(u) - 1$, so the true minimum
$\min_{v \in N^+(u)} r_t(v)$ may fall strictly below $r_t(u) - 1$, whereas at least one lattice neighbor attains $r_t(u)-1$ exactly, possibly several when $u$ and $t$ differ in more than one coordinate, and no lattice neighbor can do better than $r_t(u) - 1$.
This asymmetry between the local neighbors, which change
$r_t$ by exactly one step, and the shortcut, which may change it by an
arbitrary amount, is the source of the different delivery-time argument required for this case.


\subsection{Model Comparison and Oracle Information}\label{sec:discussion}
Both settings of Case $1$ and Case $2$ (see subsections~\ref{sec:case1},~\ref{sec:case2}) use the same $k$-dimensional Kleinberg topology, with nearest-neighbor lattice links and one independently sampled directed shortcut per node, fixed throughout the routing \cite{Kleinberg2000STOC,NguyenMartel2004PODC}. The distinction is the information available to the crawler. Classical lattice-greedy routing in \cite{kleinberg2000nature,Kleinberg2000STOC} uses coordinates to compute \emph{exact} lattice distances. Case $1$ retains this coordinate access and additionally supplies predictions of the graph distance $\ell_G(v,t)$, whereas Case $2$ hides coordinates and supplies predictions of the lattice distance $r(v,t)$. Both cases use the same memoryless rule of choosing a neighbor of minimum predicted distance. 
Thus, Case $1$ provides information beyond the classical model, while Case $2$ studies imperfect access to the lattice distances used by classical greedy routing. 
In particular, the graph-distance prediction oracle (Definition~\ref{def:oracle-case1}) must incorporate information about the influence of shortcuts in  the realized network, including those belonging to unvisited nodes; it is not sufficient to know only their sampling distribution. 
Indeed, consider two shortcut realizations $G$ and $G'$ that are compatible with the same complete observation history of the crawler, including its current local observations.
For the same queried neighbor $v$ and target $t$, suppose that
$|\ell_G(v,t)-\ell_{G'}(v,t)|>2\delta$.
An oracle whose output distribution is determined solely by that
observation history must use the same distribution in both
realizations. However, the intervals
$[\ell_G(v,t)-\delta,\ell_G(v,t)+\delta]$ and
$[\ell_{G'}(v,t)-\delta,\ell_{G'}(v,t)+\delta]$
are disjoint, so that distribution cannot satisfy the almost-sure
additive guarantee in both realizations. In such situations,
knowledge of the shortcut-sampling distribution and the crawler's
observation history is insufficient: the oracle requires additional
instance-specific information about the effects of unobserved
shortcuts, without necessarily identifying those shortcuts
explicitly.
In contrast, the lattice-distance prediction oracle does not require information about unobserved shortcuts since $r(v,t)$ depends only on the underlying lattice embedding. 
Hence, the faster delivery time in Case $1$ should be interpreted as a consequence of a stronger information model, with our analysis quantifying the dependence
of the delivery-time bound on prediction error. 
In both cases, the oracle guarantees must hold after every feasible routing history, including revisits. The construction of such an oracle and the cost of acquiring its information is outside the scope of our analysis.

\subsection{Delivery Time}
\label{sec:delivery-time}

The delivery time is the random variable
\begin{equation}
    T := \min\{i \ge 0 : U_i = t\},
    \label{eq:delivery-time}
\end{equation}
with the convention that $T = \infty$ if the crawler never reaches the
target. Since $U_0 = s$ and each routing step traverses exactly one link,
$T$ is the number of routing steps, equivalently the number of hops, taken
to reach $t$. This definition applies uniformly to both cases studied
below: it depends only on the routing-step process $(U_i)_{i \ge 0}$ fixed
in Section~\ref{sec:network-model}, not on which distance the crawler's
predictions concern.

For a fixed graph $G$ and fixed endpoints $s,t \in V$, we write
$\mathbb{E}[T \mid G,s,t]$ for the expectation of $T$ taken over the
prediction oracle's random outputs alone. Whenever we additionally average
over the random shortcut links or over the endpoints $s,t$, we state this
explicitly.

\section{Expected Delivery Time}
\label{sec:delivery-analysis}
We analyze the delivery time of predicted-greedy routing separately for each of the two prediction models introduced in Section~\ref{sec:model}.

\subsection{Case 1: Graph-Distance Predictions}
\label{sec:analysis-case1}

\textbf{One-Step Progress.}
Fix a realization of $G$, a source $s$, a target $t$, and a routing step
$i < T$. Let
\[
H_i := h_t(U_i)
\]
denote the true graph distance remaining at the beginning of step $i$.
Since $U_i \neq t$, a shortest directed path from $U_i$ to $t$ begins with
an outgoing neighbor. Choose any such neighbor $v_i^* \in N^+(U_i)$. Then
$h_t(v_i^*) = H_i - 1$.
The choice of $v_i^*$ is used only in the analysis; the crawler does not know $v_i^*$. Define the one-step excess distance by
\begin{equation}
    R_i := h_t(U_{i+1}) - (H_i - 1).
    \label{eq:Ri-def}
\end{equation}
Thus, $R_i$ measures the additional graph distance incurred by the selected neighbor relative to an optimal next hop. In particular, $R_i = 0$ when the
selected link is the first link of a shortest directed path to $t$.

\begin{lemma}
\label{lem:one-step-case1}
For every fixed graph $G$, target $t$, and feasible pre-step history
$\mathcal{F}_i$ ending at $U_i \neq t$,
\begin{align}
    &\Pr[R_i>0\mid \mathcal{F}_i]
    \leq 2k\varepsilon,
    \label{eq:positive-excess-probability}\\
    & \Pr[0\leq R_i\leq2\delta\mid \mathcal{F}_i]=1,
    \label{eq:excess-range}\\
    &\mathbb{E}[R_i\mid \mathcal{F}_i]
    \leq 4k\varepsilon\delta.
    \label{eq:expected-excess}
\end{align}
\end{lemma}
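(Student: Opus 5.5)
The plan is to fix $G$, $t$ and a feasible history $\mathcal{F}_i$ with $U_i=u\neq t$, and then derive each of the three claims from a single inclusion of events. The analysis neighbor $v_i^*$ is chosen by a deterministic rule from $G$, $t$ and $u$, for instance the first shortest-path neighbor in some fixed order. Hence $v_i^*$ is determined by $\mathcal{F}_i$ together with the fixed graph, and probabilities conditioned on $\mathcal{F}_i$ may be applied to events that involve $v_i^*$. The only randomness left after conditioning is the prediction vector $\hat h_i$. The selected node $U_{i+1}$ is a deterministic function of $\hat h_i$, through the argmin and the fixed tie-breaking rule.

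For \eqref{eq:excess-range}, I will argue as follows. The lower bound $R_i\ge 0$ is deterministic: $(u,U_{i+1})$ is an edge of $G$, so the triangle inequality for the shortest-path metric $\ell_G$ gives $h_t(U_{i+1})\ge h_t(u)-1=H_i-1$. For the upper bound, the greedy rule gives $\hat h_i(U_{i+1})\le \hat h_i(v_i^*)$ because $U_{i+1}$ is a minimizer. Applying the almost-sure additive guarantee (ii) on both sides yields
\[
h_t(U_{i+1})-\delta\le \hat h_i(U_{i+1})\le \hat h_i(v_i^*)\le h_t(v_i^*)+\delta=H_i-1+\delta ,
\]
so $R_i\le 2\delta$ almost surely. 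This argument uses only that $v_i^*$ is a competitor in the argmin, so ties and tie-breaking do not matter here.

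For \eqref{eq:positive-excess-probability}, the key inclusion is
\[
\{R_i>0\}\subseteq \bigcup_{w\in N^+(u),\,h_t(w)>h_t(v_i^*)} \{\hat h_i(w)\le \hat h_i(v_i^*)\}.
\]
Indeed, if $R_i>0$ then $w:=U_{i+1}$ satisfies $h_t(w)>H_i-1=h_t(v_i^*)$, and being a minimizer it satisfies $\hat h_i(w)\le\hat h_i(v_i^*)$. Each event in the union has conditional probability at most $\varepsilon$ by monotonicity (i), applied with $v=v_i^*$. The index set excludes $v_i^*$ and lies in $N^+(u)$, which has at most $2k+1$ elements, so it has at most $2k$ elements. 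A union bound then gives $2k\varepsilon$; boundary nodes with fewer neighbors only make the bound easier. One point needs care: $U_{i+1}$ might coincide with another optimal neighbor. In that case $R_i=0$ and no event is needed, which is exactly why the union runs only over strictly worse $w$.

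Finally, \eqref{eq:expected-excess} follows by combining the two previous facts: $R_i\le 2\delta\,\mathbf{1}\{R_i>0\}$ almost surely, so $\mathbb{E}[R_i\mid\mathcal{F}_i]\le 2\delta\Pr[R_i>0\mid\mathcal{F}_i]\le 4k\varepsilon\delta$. The step I expect to require the most care is the measurability bookkeeping. One must confirm that $v_i^*$ and the index set of the union are fixed given $\mathcal{F}_i$, so that (i) legitimately applies to each pair, including at revisited nodes. The paper's requirement that the oracle guarantee hold for every feasible history is precisely what makes this step valid without any independence assumption.
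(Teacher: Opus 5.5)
Your proposal is correct and follows the paper's proof essentially step for step. It uses the triangle-inequality lower bound $R_i\ge 0$, the argmin-plus-additive-accuracy chain for $R_i\le 2\delta$, the union bound via monotonicity against $v_i^*$ over at most $2k$ strictly worse neighbors, and finally $\mathbb{E}[R_i\mid\mathcal{F}_i]\le 2\delta\Pr[R_i>0\mid\mathcal{F}_i]$. Fixing $v_i^*$ deterministically from $(G,t,U_i)$ is a harmless clarification of a measurability point the paper leaves implicit.
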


\begin{proof}
For every outgoing neighbor $w \in N^+(U_i)$, the link $(U_i,w)$ followed by a shortest directed path from $w$ to $t$ gives a directed path from
$U_i$ to $t$. Therefore, $H_i \leq 1 + h_t(w)$, and hence
$h_t(w) \geq H_i - 1$. Applying this inequality to the selected neighbor $U_{i+1}$ shows that $R_i \geq 0$.

We first bound the probability that $R_i$ is positive. If $R_i > 0$, then predicted-greedy has selected an outgoing neighbor $w = U_{i+1}$ satisfying
$h_t(v_i^*) < h_t(w)$. Since $U_{i+1}$ minimizes $\hat h_i$ over
$N^+(U_i)$, and $v_i^* \in N^+(U_i)$, this selection necessarily satisfies
$\hat h_i(w) \le \hat h_i(v_i^*)$. Hence
$$
\{R_i > 0\} \subseteq \bigcup_{\substack{w \in N^+(U_i) \\ h_t(v_i^*) < h_t(w)}}
\big\{\hat h_i(w) \le \hat h_i(v_i^*)\big\}.
$$
The node $U_i$ has at most $2k$ local outgoing neighbors and one shortcut
outgoing neighbor, so $|N^+(U_i)| \le 2k+1$, and there are at most $2k$
possible choices of $w$ other than $v_i^*$. By
Definition~\ref{def:oracle-case1}(i), conditional pairwise monotonicity,
each such event satisfies
$\Pr[\hat h_i(w) \le \hat h_i(v_i^*) \mid \mathcal{F}_i] \le \varepsilon$.
Taking a union bound over these at most $2k$ events gives
\[
\Pr[R_i > 0 \mid \mathcal{F}_i] \le 2k\varepsilon.
\]

We already established $R_i \ge 0$ above. It remains to show $R_i \leq 2\delta$ almost surely conditional
on $\mathcal{F}_i$.
Since $U_{i+1}$ minimizes $\hat h_i$ over $N^+(U_i)$ by
Definition~\ref{def:predicted-greedy-case1}, and $v_i^* \in N^+(U_i)$, the
inequality
$\hat h_i(U_{i+1}) \le \hat h_i(v_i^*)$
holds surely, it is an immediate consequence of $U_{i+1}$ being an argmin,
true for every realization of $\hat h_i$. By the additive approximation guarantee,
Definition~\ref{def:oracle-case1}(ii), which holds with probability $1$
conditional on $\mathcal{F}_i$ for every $v \in N^+(U_i)$, applied first to
$U_{i+1}$ and then to $v_i^*$,
$$
h_t(U_{i+1}) \le \hat h_i(U_{i+1}) + \delta
\le \hat h_i(v_i^*) + \delta
\le h_t(v_i^*) + 2\delta
= H_i - 1 + 2\delta.
$$
By~\eqref{eq:Ri-def}, this is exactly $R_i \le 2\delta$, and, as a
combination of a sure fact (the argmin inequality) and an almost-sure fact
(the additive approximation), it holds almost surely conditional on
$\mathcal{F}_i$. Together with $R_i \ge 0$, established above, this shows
\[
\Pr[0 \le R_i \le 2\delta \mid \mathcal{F}_i] = 1.
\]

Since $R_i \ge 0$ always, the complement of $\{R_i > 0\}$ is exactly
$\{R_i = 0\}$; that is, $R_i = R_i \cdot \mathbf{1}\{R_i > 0\}$ surely.
Taking conditional expectations and using the almost-sure bound
$R_i \le 2\delta$ from~\eqref{eq:excess-range} on the event $\{R_i>0\}$,
$$
\mathbb{E}[R_i \mid \mathcal{F}_i]
= \mathbb{E}\big[R_i \cdot \mathbf{1}\{R_i > 0\} \mid \mathcal{F}_i\big]
\le 2\delta \, \Pr[R_i > 0 \mid \mathcal{F}_i].
$$
By~\eqref{eq:positive-excess-probability}, $\Pr[R_i > 0 \mid \mathcal{F}_i] \le 2k\varepsilon$, then
\[
\mathbb{E}[R_i \mid \mathcal{F}_i] \le 2\delta \cdot 2k\varepsilon = 4k\varepsilon\delta.
\]
\qed
\end{proof}

\noindent
\textbf{Drift Toward the Target.}
We now prove the following theorem.

\begin{theorem}
\label{thm:case1}
Fix a graph $G$, a source $s$, and a target $t$. Suppose that the
prediction oracle satisfies Definition~\ref{def:oracle-case1} after every
feasible pre-step history. If $4k\varepsilon\delta < 1$, then the expected
delivery time of predicted-greedy routing satisfies
\begin{equation}
    \mathbb{E}[T \mid G,s,t] \le \frac{h_t(s)}{1 - 4k\varepsilon\delta},
    \label{eq:thm1-fixed-graph}
\end{equation}
where the expectation is over the oracle's random outputs. Consequently,
if $G$ is drawn from the $k$-dimensional Kleinberg model of
Section~\ref{sec:network-model} with $p=q=1$ and $\alpha=k$, then, for
every fixed pair $s,t \in V$,
\begin{equation}
    \mathbb{E}[T \mid s,t] = O\!\left(\frac{\log n}{1-4k\varepsilon\delta}\right),
    \label{eq:thm1-averaged}
\end{equation}
where the expectation in~\eqref{eq:thm1-averaged} is over both the random
shortcut links and the oracle's random outputs. The hidden constant
depends only on $k$.
\end{theorem}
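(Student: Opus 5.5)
The plan is an additive-drift (optional-stopping) argument on the potential $H_i=h_t(U_i)$, using Lemma~\ref{lem:one-step-case1}. By \eqref{eq:Ri-def}, on $\{i<T\}$ we have $H_{i+1}=H_i-1+R_i$. Hence $\mathbb{E}[H_{i+1}-H_i\mid\mathcal{F}_i]\le -(1-4k\varepsilon\delta)=: -c$ with $c>0$ by hypothesis. I will define
\[
M_i := H_{i\wedge T} + c\,(i\wedge T),
\]
and check that $M_i$ is a supermartingale with respect to $(\mathcal{F}_i)$. On $\{T\le i\}$ it is frozen. On $\{T>i\}$ the increment is $H_{i+1}-H_i+c = R_i-1+c$, and its conditional expectation is at most $4k\varepsilon\delta-1+c=0$ by \eqref{eq:expected-excess}. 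Since $\{T>i\}$ is determined by $U_i$, it is $\mathcal{F}_i$-measurable, so this step is legitimate.

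Next I avoid assuming $T<\infty$ a priori. Taking expectations gives $\mathbb{E}[H_{i\wedge T}]+c\,\mathbb{E}[i\wedge T]\le M_0=h_t(s)$ for every $i$. Because $H\ge 0$, this yields $\mathbb{E}[i\wedge T]\le h_t(s)/c$. Monotone convergence as $i\to\infty$ then gives $\mathbb{E}[T\mid G,s,t]\le h_t(s)/(1-4k\varepsilon\delta)$, which is \eqref{eq:thm1-fixed-graph}, and in particular $T<\infty$ almost surely. Increments are bounded by $|R_i-1|\le 2\delta+1$ (from \eqref{eq:excess-range}), so all expectations are finite. This is the key technical point: no independence across steps is used, only the per-step conditional bounds, which hold after every feasible history including revisits.

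For \eqref{eq:thm1-averaged}, I fix $s,t$ and average \eqref{eq:thm1-fixed-graph} over $G$. The predictions depend on $G$, but the bound holds pointwise in $G$, so the tower property gives $\mathbb{E}[T\mid s,t]\le \mathbb{E}_G[\ell_G(s,t)]/(1-4k\varepsilon\delta)$. It then remains to bound $\mathbb{E}_G[\ell_G(s,t)]=O(\log n)$ for every fixed pair, with the constant depending only on $k$.

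I expect this last step to be the main obstacle. Martel and Nguyen~\cite{NguyenMartel2004PODC} give expected diameter $\Theta(\log n)$, but I have to confirm that their bound is on the expected maximum, or at least on every fixed pair, rather than only averaged over random endpoints; the paper's stated convention averages over $s,t$. The plan is to invoke their expected-diameter bound directly, since $\ell_G(s,t)\le\operatorname{diam}(G)$ pointwise. If their statement only controls diameter with high probability, I would split on that event: on the complement, use $\ell_G\le L=O(n)$ and require the failure probability to be $O(\log n/n)$, which their polynomially small failure bound supplies. Boundary nodes are handled by the paper's interior convention.
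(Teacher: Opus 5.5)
Your proposal is correct and is essentially the paper's proof. The paper also runs an additive-drift argument on $H_i=h_t(U_i)$ using \eqref{eq:expected-excess}: it sums the drift over the steps before $\min\{T,m\}$, which is equivalent to your supermartingale $M_i$, then uses $H\ge 0$ and monotone convergence, and finally averages the pointwise-in-$G$ bound over the random graph. It settles the final step exactly as in your primary plan, via $h_t(s)\le\max_{x,y}\ell_G(x,y)$ and Martel and Nguyen's expected-diameter bound $\mathbb{E}[\max_{x,y}\ell_G(x,y)]=O(\log n)$, which covers every fixed pair, so your high-probability fallback is not needed.
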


\begin{proof}
From the definition of $R_i$ in~\eqref{eq:Ri-def}, for every step $i < T$,
\[
H_{i+1} = H_i - 1 + R_i
\ .
\]
Taking conditional expectations and applying Lemma~\ref{lem:one-step-case1},
\begin{equation}
    \mathbb{E}[H_i - H_{i+1} \mid \mathcal{F}_i]
    = 1 - \mathbb{E}[R_i \mid \mathcal{F}_i]
    \ge 1 - 4k\varepsilon\delta.
    \label{eq:drift-case1}
\end{equation}
The assumption $4k\varepsilon\delta < 1$ ensures that the right-hand side of
\eqref{eq:drift-case1} is positive, so $H_i$ drifts downward in expectation
at every step before $T$.

We turn this drift into a bound on $\mathbb{E}[T]$ via a direct
additive-drift argument. Fix an integer $m \ge 1$ and consider the stopped
process at $\min\{T,m\}$. Multiplying~\eqref{eq:drift-case1} by the
indicator that step $i$ occurs before $\min\{T,m\}$, summing over
$i = 0,\ldots,m-1$, and taking expectations gives
$$
\mathbb{E}\big[H_0 - H_{\min\{T,m\}} \mid G,s,t\big]
\ge (1-4k\varepsilon\delta)\, \mathbb{E}[\min\{T,m\} \mid G,s,t].
$$
Since $H_{\min\{T,m\}} \ge 0$ and $H_0 = h_t(s)$, rearranging gives
$$
\mathbb{E}[\min\{T,m\} \mid G,s,t] \le \frac{h_t(s)}{1-4k\varepsilon\delta}.
$$
Letting $m \to \infty$ and applying monotone convergence to the
nondecreasing sequence $\min\{T,m\}$ yields~\eqref{eq:thm1-fixed-graph}.

To obtain~\eqref{eq:thm1-averaged}, we average~\eqref{eq:thm1-fixed-graph}
over the random Kleinberg graph. Martel and Nguyen showed that the expected
diameter of the $k$-dimensional Kleinberg graph with $p=q=1$ and exponent
$k$ is $\Theta(\log n)$ \cite{NguyenMartel2004PODC}. Since $h_t(s) = \ell_G(s,t) \le \max_{x,y \in V} \ell_G(x,y)$,
$$
\mathbb{E}[h_t(s)] \le \mathbb{E}\Big[\max_{x,y \in V} \ell_G(x,y)\Big] = O(\log n).
$$
Taking expectations in~\eqref{eq:thm1-fixed-graph} over the random graph
therefore gives 
\[
\mathbb{E}[T \mid s,t] = O\!\left(\frac{\log n}{1-4k\varepsilon\delta}\right).
\]
This completes the proof of Theorem \ref{thm:case1}.
\qed
\end{proof}

\paragraph{Revisiting nodes.}
The drift bound~\eqref{eq:drift-case1} holds conditionally after every
feasible pre-step history, including histories in which the crawler has
revisited one or more nodes: the argument never resamples an already
exposed shortcut link, nor treats a revisit as a fresh shortcut trial,
only the prediction vector $\hat h_i$ is regenerated at each step, and no independence between prediction vectors at different steps is required.
Since Theorem~\ref{thm:case1} gives $\mathbb{E}[T \mid G,s,t] < \infty$ for
every fixed $G,s,t$, it follows that $\Pr[T < \infty \mid G,s,t] = 1$;
the crawler reaches the target with probability one.

\subsection{Case 2: Lattice-Distance Predictions}
\label{sec:analysis-case2}

\textbf{One-Step Progress.}
Fix a realization of $G$, a source $s$, a target $t$, and a routing step
$i < T$. Let
\[
\rho_i := r_t(U_i)
\]
denote the true lattice distance remaining at the beginning of step $i$.
By the triangle inequality, every $v \in N(U_i)$ satisfies
$r_t(v) \ge r_t(U_i) - r(U_i,v) = \rho_i - 1$, and at least one lattice
neighbor attains this bound with equality (the neighbor moving one step
toward $t$ along any coordinate on which $U_i$ and $t$ differ). Choose any
such neighbor $v_i^* \in N(U_i)$. Then
$r_t(v_i^*) = \rho_i - 1$.
As in Case~1, the choice of $v_i^*$ is used only in the analysis. Define
the one-step excess distance by
\begin{equation}
    \Delta_i := r_t(U_{i+1}) - (\rho_i - 1).
    \label{eq:Delta-def}
\end{equation}
Unlike $R_i$ in Case~1, $\Delta_i$ is \emph{not} guaranteed nonnegative:
the shortcut endpoint $W_{U_i}$ may satisfy $r_t(W_{U_i}) < \rho_i - 1$,
since $r_t$ is not itself a shortest-path metric on $G$, so selecting the
shortcut can give $\Delta_i < 0$, unrequired extra progress. The lemma
below therefore bounds $\Delta_i$ only from above.

\begin{lemma}
\label{lem:one-step-case2}
For every fixed graph $G$, target $t$, and feasible pre-step history
$\mathcal{F}_i$ ending at $U_i \neq t$,
\begin{align}
    &\Pr[\Delta_i > 0 \mid \mathcal{F}_i] \le 2k\varepsilon,
    \label{eq:positive-excess-case2}\\
    &\Pr[\Delta_i \le 2\delta \mid \mathcal{F}_i] = 1,
    \label{eq:excess-upper-case2}\\
    &\mathbb{E}[\Delta_i \mid \mathcal{F}_i] \le 4k\varepsilon\delta.
    \label{eq:expected-excess-case2}
\end{align}
\end{lemma}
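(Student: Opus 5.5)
We follow the template of Lemma~\ref{lem:one-step-case1} almost verbatim, with $v_i^*\in N(U_i)$ the lattice neighbor satisfying $r_t(v_i^*)=\rho_i-1$. The one structural change is this: in Case~1 we used $R_i\ge 0$ to write $R_i=R_i\mathbf{1}\{R_i>0\}$. That identity is no longer available, since $\Delta_i$ may be negative when the shortcut is selected. We replace it by the inequality $\Delta_i\le \Delta_i\mathbf{1}\{\Delta_i>0\}$, which holds surely. This inequality points in the direction we need, because negative values of $\Delta_i$ only help an upper bound on the expectation.

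\emph{Step 1: the tail bound \eqref{eq:positive-excess-case2}.} Suppose $\Delta_i>0$. Then the selected neighbor $w=U_{i+1}$ satisfies $r_t(w)>\rho_i-1=r_t(v_i^*)$, so $w\neq v_i^*$. Since $U_{i+1}$ is an argmin of $\hat p_i$ over $N^+(U_i)$ and $v_i^*\in N^+(U_i)$, we also have $\hat p_i(w)\le\hat p_i(v_i^*)$. Hence
\[
\{\Delta_i>0\}\subseteq\bigcup_{\substack{w\in N^+(U_i)\\ r_t(v_i^*)<r_t(w)}}\bigl\{\hat p_i(w)\le \hat p_i(v_i^*)\bigr\}.
\]
Because $|N^+(U_i)|\le 2k+1$, there are at most $2k$ candidates $w\neq v_i^*$. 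Each event has conditional probability at most $\varepsilon$ by Definition~\ref{def:oracle-case2}(i), and a union bound gives $2k\varepsilon$.

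\emph{Step 2: the almost-sure upper bound \eqref{eq:excess-upper-case2}.} The argmin property gives $\hat p_i(U_{i+1})\le\hat p_i(v_i^*)$ surely. Combining this with Definition~\ref{def:oracle-case2}(ii), applied to both $U_{i+1}$ and $v_i^*$, gives, almost surely conditional on $\mathcal{F}_i$,
\[
r_t(U_{i+1})\le \hat p_i(U_{i+1})+\delta\le \hat p_i(v_i^*)+\delta\le r_t(v_i^*)+2\delta=\rho_i-1+2\delta.
\]
This is exactly $\Delta_i\le 2\delta$. No lower bound is claimed, and none is needed.

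\emph{Step 3: the expectation bound \eqref{eq:expected-excess-case2}.} Since $\Delta_i\le\Delta_i\mathbf{1}\{\Delta_i>0\}$ surely, and $\Delta_i\le 2\delta$ almost surely on $\{\Delta_i>0\}$, we get
\[
\mathbb{E}[\Delta_i\mid\mathcal{F}_i]\le\mathbb{E}[\Delta_i\mathbf{1}\{\Delta_i>0\}\mid\mathcal{F}_i]\le 2\delta\,\Pr[\Delta_i>0\mid\mathcal{F}_i]\le 4k\varepsilon\delta.
\]
This expectation is well defined: $\Delta_i$ is bounded below by $-\rho_i-1\ge -L-1$, because $r_t\ge 0$, and bounded above by $2\delta$, so it is integrable.

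The main point requiring care is Step~3, because the sign of $\Delta_i$ is no longer controlled. Our plan is to use only the positive part, as above, which sidesteps the issue entirely. A second point to check is that $v_i^*$ is a genuine lattice neighbor at distance $\rho_i-1$ even when $U_i$ is adjacent to $t$. This holds because $U_i\neq t$ and $U_i$ is interior, so stepping toward $t$ along a differing coordinate stays in the grid.
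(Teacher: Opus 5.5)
Your proposal is correct and follows the paper's proof essentially step for step: you use the same union bound over the at most $2k$ neighbors $w\neq v_i^*$ for the tail bound, and the same argmin-plus-additive-error chain for the almost-sure bound $\Delta_i\le 2\delta$. Your pointwise inequality $\Delta_i\le\Delta_i\mathbf{1}\{\Delta_i>0\}$ is a compact form of the paper's split of $\mathbb{E}[\Delta_i\mid\mathcal{F}_i]$ over $\{\Delta_i>0\}$ and $\{\Delta_i\le 0\}$, with the nonpositive second term dropped. Your integrability remark is a harmless extra, though the sharp lower bound is $\Delta_i\ge 1-\rho_i$, and $v_i^*$ exists for any $U_i\neq t$, interior or not.
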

\begin{proof}
If $\Delta_i > 0$,
then predicted-greedy has selected an outgoing neighbor $w = U_{i+1}$
satisfying $r_t(v_i^*) < r_t(w)$. Since $U_{i+1}$ minimizes $\hat p_i$ over
$N^+(U_i)$, and $v_i^* \in N^+(U_i)$, this selection necessarily satisfies
$\hat p_i(w) \le \hat p_i(v_i^*)$. Hence
\[
\{\Delta_i > 0\} \subseteq \bigcup_{\substack{w \in N^+(U_i) \\ r_t(v_i^*) < r_t(w)}}
\big\{\hat p_i(w) \le \hat p_i(v_i^*)\big\}.
\]
There are at most $|N^+(U_i)| - 1 \le 2k$ choices of $w$ other than
$v_i^*$. By Definition~\ref{def:oracle-case2}(i), conditional pairwise
monotonicity, each such event satisfies
$\Pr[\hat p_i(w) \le \hat p_i(v_i^*) \mid \mathcal{F}_i] \le \varepsilon$.
A union bound over these at most $2k$ events gives
\[
\Pr[\Delta_i > 0 \mid \mathcal{F}_i] \le 2k\varepsilon.
\]
Since $U_{i+1}$
minimizes $\hat p_i$ over $N^+(U_i)$ by
Definition~\ref{def:predicted-greedy-case2}, and $v_i^* \in N^+(U_i)$, the
inequality $\hat p_i(U_{i+1}) \le \hat p_i(v_i^*)$ holds surely, for every
realization of $\hat p_i$. By the additive approximation guarantee,
Definition~\ref{def:oracle-case2}(ii), which holds with probability $1$
conditional on $\mathcal{F}_i$ for every $v \in N^+(U_i)$, applied first to
$U_{i+1}$ and then to $v_i^*$,
\[
r_t(U_{i+1}) \le \hat p_i(U_{i+1}) + \delta
\le \hat p_i(v_i^*) + \delta
\le r_t(v_i^*) + 2\delta
= \rho_i - 1 + 2\delta.
\]
By~\eqref{eq:Delta-def}, this is exactly $\Delta_i \le 2\delta$, and, as a
combination of a sure fact and an almost-sure fact, it holds almost surely
conditional on $\mathcal{F}_i$, proving
\[
\Pr[\Delta_i \le 2\delta \mid \mathcal{F}_i] = 1.
\]
Because $\Delta_i$ can be negative, write $\mathbb{E}[\Delta_i \mid \mathcal{F}_i]$
as the sum of its contributions from the two events $\{\Delta_i > 0\}$ and
$\{\Delta_i \le 0\}$:
\[
\mathbb{E}[\Delta_i \mid \mathcal{F}_i]
= \mathbb{E}\big[\Delta_i \cdot \mathbf{1}\{\Delta_i > 0\} \mid \mathcal{F}_i\big]
+ \mathbb{E}\big[\Delta_i \cdot \mathbf{1}\{\Delta_i \le 0\} \mid \mathcal{F}_i\big].
\]
The second term is at most $0$, since $\Delta_i \le 0$ on that event. The
first term is at most $2\delta \Pr[\Delta_i > 0 \mid \mathcal{F}_i]$ by
\eqref{eq:excess-upper-case2}, hence at most $4k\varepsilon\delta$ by
\eqref{eq:positive-excess-case2}. Adding the two bounds gives
\[
\mathbb{E}[\Delta_i \mid \mathcal{F}_i] \le 4k\varepsilon\delta.
\]
\qed
\end{proof}

\noindent
\textbf{Drift Toward the Target.}
We now prove the following theorem.

\begin{theorem}
\label{thm:case2}
Fix a graph $G$, a source $s$, and a target $t$. Suppose that the
prediction oracle satisfies Definition~\ref{def:oracle-case2} after every
feasible pre-step history. If $4k\varepsilon\delta < 1$, then the expected
delivery time of predicted-greedy routing satisfies
\begin{equation}
    \mathbb{E}[T \mid G,s,t] \le \frac{r(s,t)}{1 - 4k\varepsilon\delta},
    \label{eq:thm2-fixed-graph}
\end{equation}
where the expectation is over the oracle's random outputs. Consequently,
if $G$ is drawn from the $k$-dimensional Kleinberg model of
Section~\ref{sec:network-model} with $p=q=1$ and $\alpha=k$, then, for
every fixed pair $s,t \in V$,
\begin{equation}
    \mathbb{E}[T \mid s,t] = O\!\left(\frac{n}{1-4k\varepsilon\delta}\right)
    = O\!\left(\frac{N^{1/k}}{1-4k\varepsilon\delta}\right),
    \label{eq:thm2-averaged}
\end{equation}
where the expectation in~\eqref{eq:thm2-averaged} is over the random
source and target only. The hidden constant depends only on $k$.
\end{theorem}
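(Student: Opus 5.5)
The plan mirrors the additive-drift argument of Theorem~\ref{thm:case1}, with the potential $H_i$ replaced by the lattice potential $\rho_i = r_t(U_i)$. By~\eqref{eq:Delta-def}, every step $i<T$ satisfies $\rho_{i+1} = \rho_i - 1 + \Delta_i$. Taking conditional expectations and applying~\eqref{eq:expected-excess-case2} from Lemma~\ref{lem:one-step-case2} gives the drift inequality
\[
\mathbb{E}[\rho_i - \rho_{i+1} \mid \mathcal{F}_i] = 1 - \mathbb{E}[\Delta_i \mid \mathcal{F}_i] \ge 1 - 4k\varepsilon\delta ,
\]
which is positive under the hypothesis $4k\varepsilon\delta<1$. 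The possible negativity of $\Delta_i$ only helps here. The lemma already gives an upper bound on $\mathbb{E}[\Delta_i\mid\mathcal{F}_i]$, so a shortcut that overshoots toward $t$ increases the drift rather than threatening it.

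Next, fix $m\ge 1$ and stop the process at $\min\{T,m\}$. Multiply the drift inequality by $\mathbf{1}\{i<\min\{T,m\}\}$, which is $\mathcal{F}_i$-measurable. Then sum over $i=0,\dots,m-1$ and take expectations given $G,s,t$. This yields
\[
\mathbb{E}\bigl[\rho_0-\rho_{\min\{T,m\}}\mid G,s,t\bigr]\ge(1-4k\varepsilon\delta)\,\mathbb{E}[\min\{T,m\}\mid G,s,t].
\]
All quantities are bounded: $\rho_i\le L$ and each increment has magnitude at most $L$, so the summation and exchange of expectation are legitimate. Using $\rho_{\min\{T,m\}}\ge 0$ and $\rho_0=r(s,t)$, we get $\mathbb{E}[\min\{T,m\}\mid G,s,t]\le r(s,t)/(1-4k\varepsilon\delta)$. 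Monotone convergence as $m\to\infty$ then gives~\eqref{eq:thm2-fixed-graph}. As in Case~1, a finite expectation also implies $T<\infty$ almost surely. The argument never uses independence across steps; it only uses the per-history guarantee of Definition~\ref{def:oracle-case2}, so revisits need no special treatment.

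For~\eqref{eq:thm2-averaged}, note that the bound~\eqref{eq:thm2-fixed-graph} holds for every realization of the shortcuts. One can therefore bound it uniformly by $r(s,t)\le L\le k(n-1)=O(n)=O(N^{1/k})$, with a constant depending only on $k$. Averaging over $G$, or over a uniform choice of $s,t$, preserves this bound; for uniform endpoints, $\mathbb{E}[r(s,t)]=\Theta(kn)$ anyway.

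The main subtlety, rather than a real obstacle, is the point that the lattice potential cannot be related to graph distance. So no logarithmic bound is available, and the argument certifies only the linear rate. The only technical check is that the stopped-sum step is valid even though $\Delta_i$ can be very negative. This holds because the increments are bounded by $L$ and we only need the inequality direction supplied by the lemma.
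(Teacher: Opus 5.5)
Your proposal is correct and follows the paper's proof essentially step for step. It uses the same identity $\rho_{i+1}=\rho_i-1+\Delta_i$ combined with Lemma~\ref{lem:one-step-case2}, the same stopped sum at $\min\{T,m\}$ with $\rho_{\min\{T,m\}}\ge 0$, and monotone convergence. The only minor difference is that for \eqref{eq:thm2-averaged} you use the deterministic bound $r(s,t)\le k(n-1)$ instead of the paper's average $\mathbb{E}[r(s,t)]=\Theta(n)$ over uniform endpoints, which actually matches the statement's ``for every fixed pair $s,t$'' phrasing more directly.
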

\begin{proof}
From the definition of $\Delta_i$ in~\eqref{eq:Delta-def}, for every step
$i < T$,
\[
\rho_{i+1} = \rho_i - 1 + \Delta_i.
\]
Taking conditional expectations and applying
Lemma~\ref{lem:one-step-case2},
\begin{equation}
    \mathbb{E}[\rho_i - \rho_{i+1} \mid \mathcal{F}_i]
    = 1 - \mathbb{E}[\Delta_i \mid \mathcal{F}_i]
    \ge 1 - 4k\varepsilon\delta,
    \label{eq:drift-case2}
\end{equation}
positive under $4k\varepsilon\delta < 1$. Fix $m \ge 1$ and consider the
stopped process at $\min\{T,m\}$. Multiplying~\eqref{eq:drift-case2} by the
indicator that step $i$ occurs before $\min\{T,m\}$, summing over
$i=0,\ldots,m-1$, and taking expectations gives
\[
\mathbb{E}\big[\rho_0 - \rho_{\min\{T,m\}} \mid G,s,t\big]
\ge (1-4k\varepsilon\delta)\, \mathbb{E}[\min\{T,m\} \mid G,s,t].
\]
Since $\rho_{\min\{T,m\}} \ge 0$ and $\rho_0 = r_t(s) = r(s,t)$,
\[
\mathbb{E}[\min\{T,m\} \mid G,s,t] \le \frac{r(s,t)}{1-4k\varepsilon\delta}.
\]
Letting $m \to \infty$ and applying monotone convergence
proves~\eqref{eq:thm2-fixed-graph}.

To obtain~\eqref{eq:thm2-averaged}, average~\eqref{eq:thm2-fixed-graph} over
a uniformly random pair $s,t \in V$. Unlike $h_t(s)$ in Case~1, $r(s,t)$
does not depend on the random shortcuts, only on the grid geometry: for
$s,t$ drawn uniformly and independently, each coordinate gap
$|s_j-t_j|$ satisfies $\mathbb{E}|s_j-t_j| = \Theta(n)$, so
$\mathbb{E}[r(s,t)] = \sum_{j=1}^k \mathbb{E}|s_j-t_j| = \Theta(n)$. Taking
expectations in~\eqref{eq:thm2-fixed-graph} over $s,t$ gives
\[
\mathbb{E}[T \mid s,t] = O\!\left(\frac{n}{1-4k\varepsilon\delta}\right).
\]
\qed
\end{proof}

\paragraph{Revisiting nodes.}
As in Case~1, the drift bound established in the proof of Theorem~\ref{thm:case2} holds conditionally after every feasible pre-step history, including histories with repeated
visits: only the prediction vector $\hat p_i$ is regenerated at a
revisited node, no shortcut link is resampled, and no independence
across steps is required. Since Theorem~\ref{thm:case2} gives
$\mathbb{E}[T \mid G,s,t] < \infty$ for every fixed $G,s,t$, it follows that
$\Pr[T < \infty \mid G,s,t] = 1$.

A natural concern for lattice-distance predicted-greedy is that, unlike the
exact case $\delta=0$ where distance decreases strictly at every step, a
prediction error can cause the crawler to select a neighbor with
$\Delta_i > 0$, potentially revisiting a previously seen node. Theorem~\ref{thm:case2} shows that, because the
oracle reissues a fresh prediction vector $\hat p_i$ at every routing step,
including at a revisited node, conditioned on the complete history
$\mathcal{F}_i$ (Definition~\ref{def:oracle-case2}), the one-step guarantee
of Lemma~\ref{lem:one-step-case2} holds identically whether or not $U_i$
has been visited before. No bound on how many times a node may be
revisited is needed, and no shortcut link is ever resampled, only the
prediction is redrawn. The drift argument therefore gives a finite
expected delivery time, and hence almost-sure termination, using only the dynamic nature
of the oracle, with no auxiliary memory or cycle-avoidance rule imposed on the crawler.

\paragraph{Remark (a stabilization guarantee).}
Theorem~\ref{thm:case2} is best read as a stabilization guarantee for memoryless predicted-greedy routing under a dynamic, history-conditioned
prediction oracle. Prediction errors may still cause revisits, or even temporary cycles; what the dynamic oracle rules out is such an error
persisting indefinitely. Because a fresh prediction vector satisfying
Definition~\ref{def:oracle-case2} is issued after every feasible routing history, including immediately after a return to a previously visited
node, the positive conditional drift established in
Lemma~\ref{lem:one-step-case2} is restored at every revisit, exactly as if the crawler had never been there before. This gives a finite expected
delivery time and, hence, almost-sure termination, with no memory of past
visits. We emphasize how demanding this assumption is: at every step the oracle may choose a new prediction distribution depending on the entire
routing history so far, it may return a different prediction on a
repeated visit to the same node, and the guarantees of
Definition~\ref{def:oracle-case2} must hold after every feasible history, not just at the outset. Theorem~\ref{thm:case2} should be read as showing
what a sufficiently adaptive oracle can buy a memoryless crawler, not as a guarantee available from an arbitrary fixed predictor.

\paragraph{Remark (comparison with Case 1).}
The averaged bound~\eqref{eq:thm2-averaged} is linear in $n$, not polylogarithmic in $n$ like Case~1's~\eqref{eq:thm1-averaged}. The drift in Theorem~\ref{thm:case2} is generated entirely by the guaranteed per-step
decrease of $1$ from a local lattice neighbor, the same guarantee plain lattice-greedy has with no shortcuts at all; the argument never invokes
the shortcut's placement $\Pr[W_u=w] \propto r(u,w)^{-k}$, so Theorem~\ref{thm:case2} and its proof are unaffected if every shortcut link is deleted from $G$. This explains why this particular drift argument certifies only a linear rate: it simply does not use the one ingredient, the shortcut's power-law placement, that Kleinberg's original phase argument exploits to obtain $\Theta(\log^2 n)$. It does not show that no sharper analysis of the same algorithm could do better. Indeed, when
$\delta=0$ the oracle is exact and predicted-greedy coincides with plain lattice-distance greedy augmented with Kleinberg's shortcuts, whose
expected delivery time is $\Theta(\log^2 n)$~\cite{NguyenMartel2004PODC};
our bound~\eqref{eq:thm2-averaged} gives only $O(n)$ in this same limit.

\section{Conclusion}
\label{sec:conclusion}
We analyzed decentralized greedy routing in Kleinberg's small-world model under the algorithms with predictions paradigm, for two natural notions of
imperfect distance information. Under graph-distance predictions (Case~1), predicted-greedy routing improves asymptotically on Kleinberg's own
$\Theta(\log^2 n)$ bound, up to a clean, explicit degradation factor in the prediction error. Under lattice-distance predictions (Case~2), the more realistic model when coordinates are hidden, we established a stabilization guarantee: predicted-greedy routing provably terminates, with no memory of past visits, purely because of the dynamic, history-conditioned nature of the oracle. This gives a finite expected
delivery time, together with a precise account of why our argument
certifies a linear rather than polylogarithmic rate: the drift we exhibit uses only the guaranteed one-step progress of a local lattice neighbor and never the shortcut's power-law placement.

This leaves a natural open question: whether a stronger, polylogarithmic
bound is achievable for lattice-distance predictions under plain greedy routing without auxiliary memory, by exploiting the shortcut's power-law
placement directly rather than only its guaranteed local-neighbor
progress. Any such argument would need to control the independence of shortcut information across possible revisits without relying on an explicit cycle-avoidance mechanism, a genuinely different technical challenge from the one resolved here.

\begin{credits}

\subsubsection{\ackname}
Ladan Kian and Ming Ming Tan were supported in part by National Science Foundation (NSF) grant CCF-2348346. 

\subsubsection{\discintname}
The authors have no competing interests to declare that are relevant to the
content of this article.

\end{credits}

\bibliographystyle{splncs04}
\bibliography{References}

@article{kleinberg2000nature,
  author  = {Jon Kleinberg},
  title   = {Navigation in a Small World},
  journal = {Nature},
  volume  = {406},
  pages   = {845},
  year    = {2000},
  doi     = {10.1038/35022643}
}

@inproceedings{kleinberg2001nips,
  author    = {Jon Kleinberg},
  title     = {Small-World Phenomena and the Dynamics of Information},
  booktitle = {Advances in Neural Information Processing Systems 14 (NIPS)},
  publisher = {MIT Press},
  pages = {431--438},
  year      = {2001}
}

@article{wattsstrogatz1998,
  author  = {Duncan J. Watts and Steven H. Strogatz},
  title   = {Collective Dynamics of `Small-World' Networks},
  journal = {Nature},
  volume  = {393},
  pages   = {440--442},
  year    = {1998}
}

@article{traversmilgram1969,
  author  = {Jeffrey Travers and Stanley Milgram},
  title   = {An Experimental Study of the Small World Problem},
  journal = {Sociometry},
  volume  = {32},
  number  = {4},
  pages   = {425--443},
  year    = {1969}
}

@inproceedings{Kleinberg2000STOC,
  author    = {Jon Kleinberg},
  title     = {The Small-World Phenomenon: An Algorithmic Perspective},
  booktitle = {Proceedings of the 32nd Annual {ACM} Symposium on Theory of Computing (STOC'00)},
  pages     = {163--170},
  year      = {2000},
  publisher = {ACM},
  doi       = {10.1145/335305.335325}
}

@misc{AlimohammadiIsikSaberi2025,
  author       = {Yeganeh Alimohammadi and Senem I{\c{s}}{\i}k and Amin Saberi},
  title        = {Local Limits of Small World Networks},
  year         = {2025},
  howpublished = {arXiv:2501.11226}
}

@inproceedings{NguyenMartel2004PODC,
  author    = {Charles U. Martel and Van Nguyen},
  title     = {Analyzing Kleinberg's (and Other) Small-World Models},
  booktitle = {Proceedings of the Twenty-Third Annual {ACM} Symposium on Principles of Distributed Computing (PODC '04)},
  pages     = {179--188},
  year      = {2004},
  publisher = {ACM},
  doi       = {10.1145/1011767.1011794}
}

@inproceedings{NguyenMartel2005SODA,
  author    = {Van Nguyen and Charles U. Martel},
  title     = {Analyzing and Characterizing Small-World Graphs},
  booktitle = {Proceedings of the Sixteenth Annual ACM-SIAM Symposium on Discrete Algorithms (SODA'05)},
  pages     = {311--320},
  year      = {2005},
  publisher = {Society for Industrial and Applied Mathematics}
}

@article{KamenevKM23,
  author       = {Aleksandar Kamenev and
                  Dariusz R. Kowalski and
                  Miguel A. Mosteiro},
  title        = {Faster Supervised Average Consensus in Adversarial and Stochastic
                  Anonymous Dynamic Networks},
  journal      = {{ACM} Trans. Parallel Comput.},
  volume       = {10},
  number       = {2},
  pages        = {13:1--13:35},
  year         = {2023},
  url          = {https://doi.org/10.1145/3593426},
  doi          = {10.1145/3593426},
  bibsource    = {dblp computer science bibliography, https://dblp.org}
}

@inproceedings{FraigniaudGavoillePaul2004PODC,
  author    = {Pierre Fraigniaud and Cyril Gavoille and Christophe Paul},
  title     = {Eclecticism Shrinks Even Small Worlds},
  booktitle = {Proceedings of the 23rd Annual {ACM} Symposium on Principles of Distributed Computing (PODC '04)},
  pages     = {169--178},
  year      = {2004},
  publisher = {ACM},
  doi       = {10.1145/1011767.1011793}
}

@inproceedings{MankuNaorWieder2004STOC,
  author    = {Gurmeet Singh Manku and Moni Naor and Udi Wieder},
  title     = {Know Thy Neighbor's Neighbor: The Power of Lookahead in Randomized {P2P} Networks},
  booktitle = {Proceedings of the 36th Annual {ACM} Symposium on Theory of Computing (STOC '04)},
  pages     = {54--63},
  year      = {2004},
  publisher = {ACM},
  doi       = {10.1145/1007352.1007368}
}

@inproceedings{NaorWieder2004IPTPS,
  author    = {Moni Naor and Udi Wieder},
  title     = {Know Thy Neighbor's Neighbor: Better Routing for Skip-Graphs and Small Worlds},
  booktitle = {Peer-to-Peer Systems III (IPTPS 2004)},
  series    = {Lecture Notes in Computer Science},
  volume    = {3279},
  pages     = {269--277},
  publisher = {Springer},
  year      = {2005},
  doi       = {10.1007/978-3-540-30183-7\_26}
}

@article{CarettaCartozoDLR09,
  author  = {C{\'e}cile Caretta Cartozo and Paolo De Los Rios},
  title   = {Extended Navigability of Small World Networks: Exact Results and New Insights},
  journal = {Physical Review Letters},
  volume  = {102},
  number  = {23},
  pages   = {238703},
  year    = {2009},
  doi     = {10.1103/PhysRevLett.102.238703}
}

@article{CarmiCSB09,
  author  = {Shai Carmi and Stephen Carter and Jie Sun and Daniel ben-Avraham},
  title   = {Asymptotic Behavior of the Kleinberg Model},
  journal = {Physical Review Letters},
  volume  = {102},
  number  = {23},
  pages   = {238702},
  year    = {2009},
  doi     = {10.1103/PhysRevLett.102.238702},
  note    = {Companion paper to Caretta Cartozo and De Los Rios (2009), same issue}
}

@inproceedings{GiakkoupisSchabanel2011STOC,
  author    = {George Giakkoupis and Nicolas Schabanel},
  title     = {Optimal Path Search in Small Worlds: Dimension Matters},
  booktitle = {Proceedings of the 43rd Annual {ACM} Symposium on Theory of Computing (STOC '11)},
  pages     = {393--402},
  year      = {2011},
  publisher = {ACM},
  doi       = {10.1145/1993636.1993689}
}

@inproceedings{FraigniaudGiakkoupis2010STOC,
  author    = {Pierre Fraigniaud and George Giakkoupis},
  title     = {On the Searchability of Small-World Networks with Arbitrary Underlying Structure},
  booktitle = {Proceedings of the 42nd Annual {ACM} Symposium on Theory of Computing (STOC '10)},
  pages     = {389--398},
  year      = {2010},
  publisher = {ACM},
  doi       = {10.1145/1806689.1806744}
}

@inproceedings{fraigniaudlebharlotker2006,
  author    = {Pierre Fraigniaud and Emmanuelle Lebhar and Zvi Lotker},
  title     = {A Doubling Dimension Threshold $\Theta(\log\log n)$ for Augmented Graph Navigability},
  booktitle = {Proceedings of the 14th Annual European Symposium on Algorithms (ESA)},
  series    = {Lecture Notes in Computer Science},
  volume    = {4168},
  pages     = {376--386},
  publisher = {Springer},
  year      = {2006},
  doi       = {10.1007/11841036\_35}
}

@article{DyerGalanisGoldbergJerrumVigoda2020,
  author  = {Martin E. Dyer and Andreas Galanis and Leslie Ann Goldberg and Mark Jerrum and Eric Vigoda},
  title   = {Random Walks on Small World Networks},
  journal = {ACM Transactions on Algorithms},
  year    = {2020},
  volume  = {16},
  number  = {3},
  pages   = {37},
  doi     = {10.1145/3382208}
}

@inproceedings{zenghsuwang2005,
  author    = {Jianyang Zeng and Wen-Jing Hsu and Jiangdian Wang},
  title     = {Near Optimal Routing in a Small-World Network with Augmented Local Awareness},
  booktitle = {Parallel and Distributed Processing and Applications (ISPA 2005)},
  series    = {Lecture Notes in Computer Science},
  volume    = {3758},
  pages     = {503--513},
  publisher = {Springer},
  year      = {2005},
  doi       = {10.1007/11576235\_52}
}

@article{CoppersmithGamarnikSviridenko2002,
  author  = {Don Coppersmith and David Gamarnik and Maxim Sviridenko},
  title   = {The Diameter of a Long-Range Percolation Graph},
  journal = {Random Structures \& Algorithms},
  volume  = {21},
  number  = {1},
  pages   = {1--13},
  year    = {2002},
  doi     = {10.1002/rsa.10042}
}

@techreport{Ruas2013Report,
author      = {Olivier Ruas},
title       = {Neighbor-of-Neighbor Routing in Small-World Networks with Power-Law Degree},
institution = {HAL open archive, Distributed, Parallel, and Cluster Computing [cs.DC]},
year        = {2013},
note        = {HAL Id: dumas-00854954}
}

@inproceedings{Kleinberg2006ICM,
author    = {Jon Kleinberg},
title     = {Complex Networks and Decentralized Search Algorithms},
booktitle = {Proceedings of the International Congress of Mathematicians (ICM), Madrid 2006},
volume    = {3},
pages     = {1019--1044},
year      = {2006},
doi       = {10.4171/022-3/50},
note      = {Nevanlinna Prize lecture; proceedings volume published 2007}
}

@inproceedings{karpkung2000,
  author    = {Brad Karp and H. T. Kung},
  title     = {{GPSR}: Greedy Perimeter Stateless Routing for Wireless Networks},
  booktitle = {Proceedings of the 6th Annual International Conference on Mobile Computing and Networking (MobiCom)},
  pages     = {243--254},
  year      = {2000}
}

@inproceedings{kranakissinghurrutia1999,
  author    = {Evangelos Kranakis and Harvinder Singh and Jorge Urrutia},
  title     = {Compass Routing on Geometric Networks},
  booktitle = {Proceedings of the 11th Canadian Conference on Computational Geometry (CCCG)},
  pages     = {51--54},
  year      = {1999}
}

@inproceedings{seadahelmygovindan2004,
  author    = {Kamal Seada and Ahmed Helmy and Ramesh Govindan},
  title     = {On the Effect of Localization Errors on Geographic Face Routing in Sensor Networks},
  booktitle = {Proceedings of the 3rd International Symposium on Information Processing in Sensor Networks (IPSN)},
  pages     = {71--80},
  year      = {2004}
}

@article{bogunapapadopouloskrioukov2010,
  author  = {Mari{\'a}n Bogu{\~n}{\'a} and Fragkiskos Papadopoulos and Dmitri Krioukov},
  title   = {Sustaining the Internet with Hyperbolic Mapping},
  journal = {Nature Communications},
  volume  = {1},
  pages   = {62},
  year    = {2010}
}

@inproceedings{purohitsvitkinakumar2018,
  author    = {Manish Purohit and Zoya Svitkina and Ravi Kumar},
  title     = {Improving Online Algorithms via {ML} Predictions},
  booktitle = {Advances in Neural Information Processing Systems 31 (NeurIPS)},
  pages     = {9684--9693},
  year      = {2018}
}

@article{lykouriasvassilvitskii2021,
  author  = {Thodoris Lykouris and Sergei Vassilvitskii},
  title   = {Competitive Caching with Machine Learned Advice},
  journal = {Journal of the ACM},
  volume  = {68},
  number  = {4},
  pages   = {24:1--24:25},
  year    = {2021},
  doi     = {10.1145/3447579}
}

@article{mitzenmachervassilvitskii2022,
  author  = {Michael Mitzenmacher and Sergei Vassilvitskii},
  title   = {Algorithms with Predictions},
  journal = {Communications of the ACM},
  volume  = {65},
  number  = {7},
  pages   = {33--35},
  year    = {2022}
}

@article{angelopouloskamali2023,
  author  = {Spyros Angelopoulos and Shahin Kamali},
  title   = {Contract Scheduling with Predictions},
  journal = {Journal of Artificial Intelligence Research},
  volume  = {77},
  pages   = {395--426},
  year    = {2023},
  doi     = {10.1613/jair.1.14117}
}

@inproceedings{boyarellenlarsen2025,
  author    = {Joan Boyar and Faith Ellen and Kim S. Larsen},
  title     = {Brief Announcement: Distributed Graph Algorithms with Predictions},
  booktitle = {Proceedings of the {ACM} Symposium on Principles of Distributed Computing (PODC)},
  pages     = {322--325},
  year      = {2025},
  doi       = {10.1145/3732772.3733530},
  note      = {Full version: arXiv:2501.05267}
}

@misc{aradhyascheideler2025,
  author       = {Vijeth Aradhya and Christian Scheideler},
  title        = {Towards Learning-Augmented Peer-to-Peer Networks: Self-Stabilizing Graph Linearization with Untrusted Advice},
  year         = {2025},
  howpublished = {arXiv:2504.02448}
}

@inproceedings{balliuetal2025,
  author    = {Alkida Balliu and Sebastian Brandt and Fabian Kuhn and Krzysztof Nowicki and Dennis Olivetti and Eva Rotenberg and Jukka Suomela},
  title     = {Distributed Computation with Local Advice},
  booktitle = {39th International Symposium on Distributed Computing (DISC)},
  series    = {LIPIcs},
  volume    = {356},
  pages     = {12:1--12:19},
  year      = {2025}
}

\appendix

\section{Pseudo-code of the Predicted-Greedy Routing Algorithm}
\label{sec:alg}

\begin{algorithm}
\caption{Predicted-Greedy Routing (Cases 1 and 2)}
\label{alg:predicted-greedy}
\begin{algorithmic}[1]
\Require source $s$, target $t$
\State $U_0 \gets s$; $i \gets 0$
\While{$U_i \neq t$}
  \State receive prediction vector $\hat h_i$ (Case~1) or $\hat p_i$ (Case~2) for $N^+(U_i)$, conditioned on $\mathcal{F}_i$
  \State $U_{i+1} \gets \arg\min_{v \in N^+(U_i)} \hat h_i(v)$ (resp.\ $\hat p_i(v)$), ties broken by a fixed rule
  \State $i \gets i+1$
\EndWhile
\State \Return $T \gets i$
\end{algorithmic}
\end{algorithm}

\end{document}